\newtheorem{theorem}{Theorem}
\newtheorem{corollary}{Corollary}
\begin{document}

\title{Competent hosts and endemicity of multi-host diseases}

\date{}

\author[1]{Camilo Sanabria}
\author[1]{Esteban Vargas}

\affil[1]{Department of Mathematics, Universidad de los Andes, Bogota, Colombia}

\maketitle


\begin{abstract}
In this paper we propose a method to study a general vector-hosts mathematical model in order to explain how the changes in biodiversity could influence the dynamics of vector-borne diseases. We find that under the assumption of frequency-dependent transmission, i.e. the assumption that the number of contacts are diluted by the total population of hosts, the presence of a competent host is a necessary condition for the existence of an endemic state. In addition, we obtain that in the case of an endemic disease with a unique competent and resilient host, an increase in its density amplifies the disease.
\end{abstract}

\section{Introduction}



The abundance of hosts of a vector-borne disease could influence the dilution or amplification of the infection. In \cite{keesing2010impacts}, the authors discusse several examples where loss of biodiversity increases disease transmission. For instance, West Nile virus is a mosquito-transmitted disease and it has been shown that there is a correlation between low bird density and amplification of the disease in humans \cite{allan2009ecological,swaddle2008increased}. One of the suggested explanations of this phenomenon is that the competent hosts persist as biodiversity is lost, meanwhile the density of the species who reduce the pathogen transmission declines.  This is the case of the Lyme disease in North America, which is transmitted by the blacklegged tick \textit{Ixodes pacificus}. The disease has the white-footed mouse \textit{Peromyscus leucopus} as competent host, which are abundant in either low-diversity or high-diversity ecosystems. On the other hand, the opossum \textit{Didelphis virginiana}, which is a suboptimal host and acts as a buffer of the disease, is poor in low-diversity forest \cite{keesing2009hosts,LoGiudice2008impact}.

Symmetrically, the dilution effect hypothesizes that increases in diversity of host species may decrease disease transmission \cite{ostfeld2000biodiversity}. The diluting effect of the individual and collective addition of suboptimal hosts is discussed in \cite{johnson2010diversity}. For example, the transmission of   \textit{Schistosoma mansoni} to target snail hosts \textit{Biomphalaria glabrata} is diluted by the inclusion of decoy hosts. These decoy hosts are individually effective to dilute the infection. However, it is interesting to notice that their combined effects are less than additive \cite{johnson2012parasite,johnson2009community}.

The objective of this paper is to study the behavior of a vector-borne disease with multiple hosts when changes in biodiversity occur.  More precisely, we present a mathematical framework that simultaneously explains why the accumulative effect of decoy hosts is less than additive and how competent and resilient host amplify the disease. To model a vector-borne disease with multiple hosts we use a dynamical system that was created based on \cite{dobson2004population}. We suggest a mathematical interpretation of competent and suboptimal host using the basic reproductive number of the cycle formed by the host and the vector. Furthermore, we assume that the abundances of the hosts follow a conservation law given by community constraints and with it we attempt to capture how a disturbance of the ecosystem leads to changes in the density of the hosts. We also give a mathematical interpretation of what a resilient species is using the conservation law. In this way, we are able to measure the effect on the dynamics of the disease due to different changes in the biodiversity. We show that in the case of endemic diseases these effects are determined by the effectiveness of the hosts to transmit the disease and the resistance of the hosts to biodiversity changes.

In section \ref{smodel} we present the variables and the equations of the model. Section \ref{sresults} is divided in three subsections. In subsection \ref{scompetent} we derive some properties of the basic reproductive number and we show how an endemic state implies the existence of a competent host. From these properties we explain why the combined effect of decoy hosts is less than additive and how biodiversity loss can entail amplification of the disease. Subsection \ref{sconstraints} introduces the community constraints that leads us to a definition of resilient host. In subsection \ref{singlecompetent} we consider the case of an endemic disease with a unique competent host. We discuss the conclusions from our results in section \ref{sconclusions} . The mathematical justification are in Appendix, section \ref{sappendix}.


\section{The model} \label{smodel}

We propose a mathematical model of a vector-borne disease that is spread among a vector $V$ and hosts $H_i$, $i= 1, \ldots,k$.  We suppose that each population is divided into susceptible individuals ($S_V$ susceptible vectors and $S_{H_i}$ susceptible hosts) and infectious individuals ($I_V$ infectious vectors and $I_{H_i}$ infectious hosts). Let $N_V$ and $N_{H_i}$ represent the total abundances of  vectors and hosts respectively. The dynamics of the disease will be studied by means of the basic reproductive number as we are interested in the strength of a pathogen to spread in an ecosystem. Modification of the ecosystem entails changes in the abundances of the hosts. After these changes are brought, the ecosystem will settle to a stable pattern of constant abundances. We are interested in understanding the basic reproductive number when the ecosystem reaches these steady states. Therefore we will assume the abundance of the vector and hosts are constant in time, i.e. $\dot{N_V} = \dot{S_V} + \dot{I_V}=0$ and $\dot{N_{H_i}} = \dot{S_{H_i}} + \dot{I_{H_i}}=0$ for $i=1,\ldots,k$. In that way, it suffices to consider as state variables  only the number of infectious species. We define the total number of hosts as $N_H= \sum_{i=1}^{k}N_{H_i}$. Our model is a system of ordinary differential equations for the infectious populations of hosts and vectors:
\begin{equation}
\begin{cases}
\dot{I_{H_i}} = \beta_{VH_i} I_V \dfrac{S_{H_i}}{N_{H}}  - \delta_{H_i} I_{H_i},\quad i = 1, \ldots, k\\
{}\\
\dot{I_V} =    \sum_{i=1}^k \beta_{H_iV} I_{H_i}\dfrac{N_{H_i} }{N_H}\dfrac{S_V}{N_V} - \delta_V I_V.\\
\end{cases}
\label{ecompleto}
\end{equation}
We assume frequency-dependent transmission and that the vector does not have preference for a specific host, hence the number of contacts between the vector and the hosts are diluted by the total population of hosts. We also assume that there are no intraspecies infections and that there is no interspecies infection between hosts, or that these are negligible. Therefore,  the only mean of infection is through contact with the vectors as Fig. \ref{complete} shows.

\begin{figure}[!hbp]
\centering
\includegraphics[scale = 0.5]{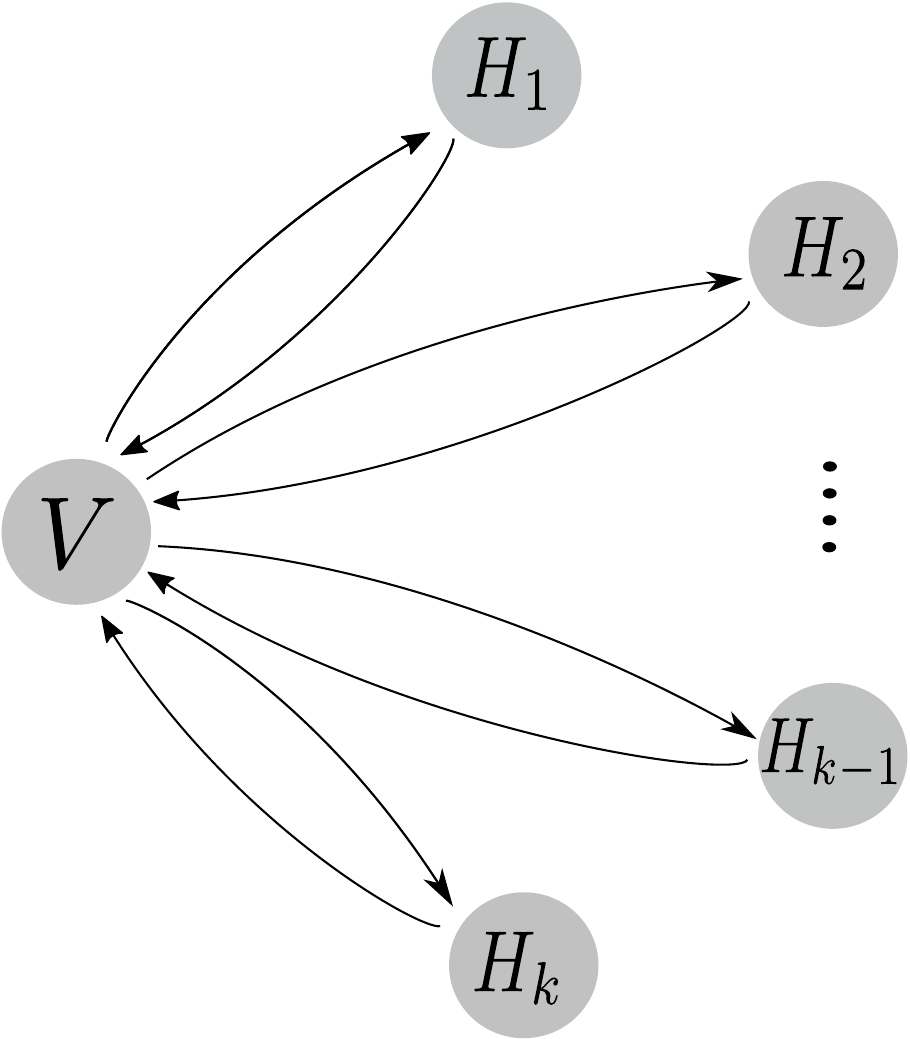}
\caption{The node $V$ represents the infectious vector and the nodes $H_i, i=1,\ldots,k $ represent the infectious reservoirs.}
\label{complete}
\end{figure}

The parameters of the model are presented in Table \ref{param1}.

Note that we could alternatively assume that infected hosts gain immunity after recovering. In such case the model would yield the same next generation matrix (see Appendix \ref{ssngm}), and since our analysis depends entirely on this matrix we would obtain the same results.

\begin{table}
\centering
\begin{tabular}{llll}
\hline\noalign{\smallskip}
Parameter & Definition & Units \\ 
\noalign{\smallskip}\hline\noalign{\smallskip}

$\beta_{VH_i}$ & Transmission rate from $V$ to $H_i$ & $[H_i]/([time]*[V])$ \\
& in the cycle formed by $V$ and $H_i$ & &\\
{}\\
$\beta_{H_iV}$ & Transmission rate from $H_i$ to $V$ & $[V]/([time]*[H_i])$\\
& in the cycle formed by $V$ and $H_i$ & \\
{}\\
$\delta_V$ & Mortality rate of infected vectors & $1/[time]$ \\
{}\\
$\delta_{H_i}$ & Mortality rate of infected hosts $H_i$ & $1/[time]$ \\

\noalign{\smallskip}\hline
\end{tabular}
\caption{Parameters of the model described by equations (\ref{ecompleto}).}\label{param1}
\end{table}

\section{Results} \label{sresults}

\subsection{Properties of the basic reproductive number and the existence of competent hosts}\label{scompetent}

We define the basic reproductive number $\mathcal{R}_0^{H_i}$ of the cycle formed by host $H_i$ and the vector $V$ by
\begin{equation*}
(\mathcal{R}_0^{H_i})^2= \dfrac{\beta_{VH_i}}{\delta_V}\dfrac{\beta_{H_iV}}{\delta_{H_i}}.
\end{equation*}
The quantity $\mathcal{R}_0^{H_i}$ is the basic reproductive number of the epidemiological model (\ref{ecompleto}) when $N_H=N_{H_i}$. It corresponds to the average number of secondary cases produced by a single infected host $H_i$ in an otherwise susceptible population when the only cycle taken into account is the interaction between $V$ and $H_i$. In this setting, the infection will spread in the population if $\mathcal{R}_0^{H_i} > 1$, and it will disappear if $\mathcal{R}_0^{H_i}<1$. Therefore, we say that a host $H_i$ is competent if $\mathcal{R}_0^{H_i} \geq 1$ and suboptimal if $\mathcal{R}_0^{H_i}<1$.

In general, taking into account all cycles, if $D_i = \frac{N_{H_i}}{N_H}$ is the density of the host $H_i$ in the total population of hosts, then the basic reproductive number $\mathcal{R}_0$ of the whole system is given by
\begin{equation*}
\mathcal{R}_0^2 = \sum_{i=1}^k (\mathcal{R}_0^{H_i})^2 D_i^2,
\end{equation*}
(see (\ref{r0msimple}) in Appendix). Note that this implies that the combined effect of decoy hosts is less than additive.

The quantity $\mathcal{R}_0$ is a convex function of $D_1, \ldots D_{k}$. We have $D_i \geq 0$ for $i = 1, \ldots, k$ and $\sum_{i=1}^kD_i = 1$. Using Lagrange multipliers, we obtain that the minimum value of $\mathcal{R}_0$ is attained in $(D_1^* , \ldots, D_k^*)$, where $$ (\mathcal{R}_0^{H_1})^2D_1^*= \ldots =  (\mathcal{R}_0^{H_k})^2D_k^*.$$ Therefore, we have
$$ D_i^* = \dfrac{\frac{1}{ (\mathcal{R}_0^{H_i})^2}}{\sum_{j=1}^k \frac{1}{ (\mathcal{R}_0^{H_j})^2}}\quad\textrm{ for } i=1,\ldots,k $$
and 
\begin{equation}\label{eharmonic}
 (\mathcal{R}_0)_{\min}^2 = \dfrac{1}{\sum_{j=1}^k \frac{1}{ (\mathcal{R}_0^{H_j})^2}}= \dfrac{1}{k}H\left((\mathcal{R}_0^{H_1})^2, \ldots, (\mathcal{R}_0^{H_k})^2\right),
\end{equation}
where $H\left((\mathcal{R}_0^{H_1})^2, \ldots, (\mathcal{R}_0^{H_k})^2\right)$ is the harmonic mean of $(\mathcal{R}_0^{H_1})^2, \ldots, (\mathcal{R}_0^{H_k})^2$. From the properties of the harmonic mean we have  
$$ \underset{i=1, \ldots, k}{\min}\{(\mathcal{R}_0^{H_i})^2\} \leq H\left((\mathcal{R}_0^{H_1})^2, \ldots, (\mathcal{R}_0^{H_k})^2\right) \leq k\underset{i=1, \ldots, k}{\min}\{(\mathcal{R}_0^{H_i})^2\}.$$
Using (\ref{eharmonic}), we obtain
\begin{equation*}
 \frac{1}{k}\underset{i=1, \ldots, k}{\min}\{(\mathcal{R}_0^{H_i})^2\} \leq (\mathcal{R}_0)_{\min}^2 \leq \underset{i=1, \ldots, k}{\min}\{(\mathcal{R}_0^{H_i})^2\}.
\end{equation*}
From the last inequalities we can observe the following. First, the presence of a reservoir with $\mathcal{R}_0^{H_i}<1$ implies that $ (\mathcal{R}_0)_{\min}<1$. Hence, in some cases we may have $\mathcal{R}_0<1$. Furthermore, from (\ref{eharmonic}) we obtain that the larger the number of the hosts is, the smaller the basic reproductive number could be. This explains how high biodiversity could lead to the dilution of the disease. On the other hand, if all the reservoirs are effectively transmitting the disease  ($\mathcal{R}_0^{H_i}\gg 1, i = 1, \ldots, k$) and there are few host ($k$ is small), then $\mathcal{R}_0>1$. This explains why in the case when competent host species thrive as a result of biodiversity loss we can expect the amplication of the disease, as discussed in \cite{keesing2010impacts} for the case of the Lyme disease \cite{keesing2009hosts,LoGiudice2008impact} and the Nipah virus \cite{epstein2006nipah}.

Furthermore, as the function $(\mathcal{R}_0)^2(D_1, \ldots, D_k)$ is convex, we have
\begin{equation*}
\mathcal{R}_0^2 \leq \underset{i=1, \ldots, k}{\max}\{(\mathcal{R}_0^{H_i})^2\}.
\end{equation*}
This inequality implies that the disease can not be amplified beyond the basic reproductive number of the most competent host. We obtain the following theorem.

\begin{theorem} \label{tcompetent}
There exist values of $D_1, \ldots, D_k$ for which $\mathcal{R}_0\ge 1$ if and only if $$ (\mathcal{R}_0)_{\min} < 1 < \mathcal{R}_0^{H_i},$$ for some $i$. In particular, under the assumption of model (\ref{ecompleto}), the endemicity of a disease implies the existence of a competent host.
\end{theorem}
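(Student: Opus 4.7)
The strategy is to extract everything from the convexity of
\[
\mathcal{R}_0^2(D_1,\ldots,D_k)=\sum_{i=1}^{k}(\mathcal{R}_0^{H_i})^2\, D_i^2
\]
on the simplex $\Delta=\{D : D_i\ge 0,\ \sum_i D_i = 1\}$. Each coordinate map $D\mapsto D_i^2$ is convex and the coefficients $(\mathcal{R}_0^{H_i})^2$ are non-negative, so $\mathcal{R}_0^2$ is convex on $\Delta$. A convex function on a compact convex polytope attains its maximum at an extreme point, and the extreme points of $\Delta$ are precisely the standard basis vectors $e_1,\ldots,e_k$, at which $\mathcal{R}_0^2(e_i)=(\mathcal{R}_0^{H_i})^2$. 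Hence
\[
\max_{D\in\Delta}\mathcal{R}_0(D)\;=\;\max_{1\le i\le k}\mathcal{R}_0^{H_i}.
\]

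With this identity in hand, the ($\Leftarrow$) direction is essentially a one-line check: assuming $\mathcal{R}_0^{H_i}>1$ for some $i$, the vertex $D=e_i$ yields $\mathcal{R}_0=\mathcal{R}_0^{H_i}>1$, so densities achieving endemicity exist. The companion hypothesis $(\mathcal{R}_0)_{\min}<1$ merely records, via the harmonic-mean formula~(\ref{eharmonic}), that the endemic region is a proper subset of $\Delta$; it plays no role in producing the witness. For the ($\Rightarrow$) direction, if some $D\in\Delta$ satisfies $\mathcal{R}_0(D)\ge 1$, the vertex-maximum identity forces $\max_i\mathcal{R}_0^{H_i}\ge 1$, proving the existence of a competent host; this is exactly the concluding sentence of the theorem.

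The only real delicacy is matching the strict inequality $1<\mathcal{R}_0^{H_i}$ with the non-strict $\mathcal{R}_0(D)\ge 1$. From the argument above I get only $\max_i\mathcal{R}_0^{H_i}\ge 1$ a priori. To handle the borderline case $\max_i\mathcal{R}_0^{H_i}=1$, I would use $\sum_i D_i^2\le 1$ on $\Delta$ (with equality exactly at the vertices) together with the explicit expression for $\mathcal{R}_0^2$ to conclude that any $D$ with $\mathcal{R}_0(D)\ge 1$ must then be a vertex $e_j$ with $\mathcal{R}_0^{H_j}=1$, so the ``competent host'' conclusion persists in the non-strict sense. Modulo this boundary bookkeeping, the whole theorem is a direct consequence of the single convexity observation, so the strictness discussion is the only step I expect to require real care in the write-up.
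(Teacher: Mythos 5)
Your argument is essentially the paper's own: the theorem is stated there as an immediate consequence of the convexity bound $\mathcal{R}_0^2\le\max_i(\mathcal{R}_0^{H_i})^2$ (maximum of a convex function on the simplex attained at a vertex) together with the harmonic-mean formula for $(\mathcal{R}_0)_{\min}$, which is exactly your vertex-maximum identity, and you are if anything more careful about the borderline case $\max_i\mathcal{R}_0^{H_i}=1$. The one caveat, which you already flag, is that neither you nor the paper actually derives $(\mathcal{R}_0)_{\min}<1$ in the forward direction (the paper obtains it only from the presence of a suboptimal host, and it fails when every host is competent), so what both arguments really establish is the substantive ``endemicity implies a competent host'' conclusion rather than the literal two-sided equivalence.
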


Figure \ref{dobsonf} represents a contour plot of $\mathcal{R}_0 $ in the case of two hosts. 

\begin{figure}[H]
\centering
\includegraphics[scale = 0.3]{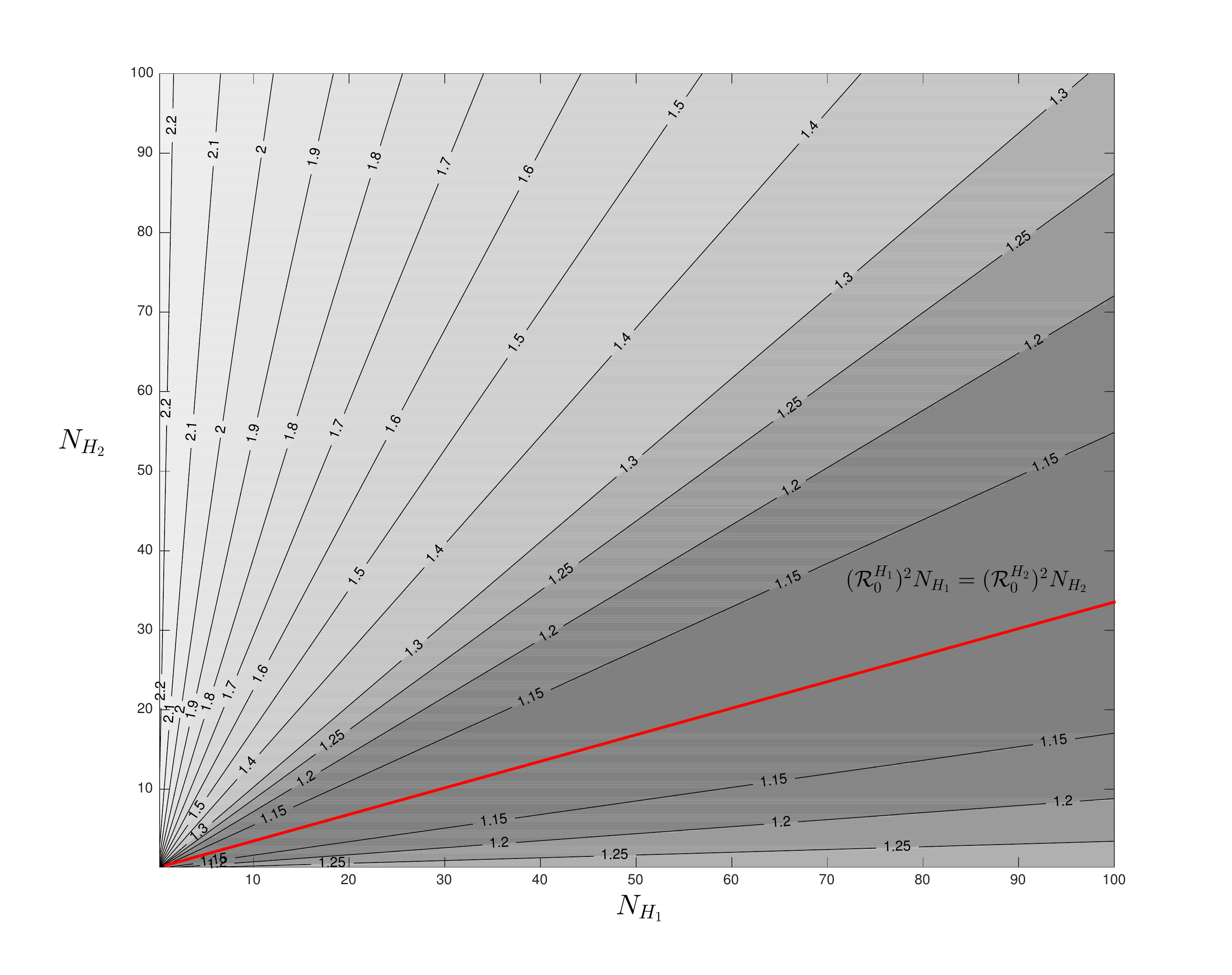}
\caption{Contour plot for different values of $\mathcal{R}_0 $ for a two hosts system where there  is a competent host (horizontal axis)  and a suboptimal host (vertical axis). In the red line $\mathcal{R}_0 $ takes its minimum value, and as we move away from the red line, $\mathcal{R}_0 $ increases.}
\label{dobsonf}
\end{figure}


\subsection{Community constraints}\label{sconstraints}

In this section we will take into consideration host interaction using community constraints. First, we will consider the case when the abundance of hosts follow linear constraints. Secondly, we will show that in the study of small changes in the abundances we can linearize general constraints.

\subsubsection{Linear case}

Let us assume that the abundances of the hosts $N_{H_1},\ldots, N_{H_k}$ follow $k-1$ linear constraints: 
\begin{equation*}
\sum_{j=1}^k a_{ij} N_{H_i} + b_i = 0,\quad\textrm{ for } i =1,\ldots, k-1,
\end{equation*}
for some constants $a_{ij}$, $b_i$.

If the matrix $(a_{ij})_{1\le i,j\le k-1}$ is nonsingular, the abundance of all hosts can be explained by the abundance of the host $H_k$: 

\begin{equation}\label{linconstraints}
N_{H_i} = -A_i N_{H_k} + B_i, \quad\textrm{ for } i=1\ldots, k-1,
\end{equation}
for some constants $A_i$, $B_i$. In particular, if $A_i > 0$ in (\ref{linconstraints}), then $N_{H_k}$ increases as $N_{H_i}$ decreases. Moreover, when $A_i=\dfrac{dN_{H_i}}{dN_{H_k}}>1$ the changes in $N_{H_i}$ are more pronounced than the changes in $N_{H_k}$. Therefore, we say that the host $k$ is the resilient if $A_i>1$ for $i=1,\ldots, k-1$ and it is non-resilient if $0<A_i < 1$ for $i=1,\ldots, k-1$.

We have
\begin{equation*}
\frac{d \mathcal{R}_0}{d N_{H_k}}=D_{\mathbf{u}}\mathcal{R}_0 = {\sum_{i=i}^k u_i r_i},
\end{equation*}
where $\mathbf{u}=(-A_1, \ldots, - A_{k-1},1 )$ and  $r_i=\dfrac{\partial \mathcal{R}_0}{\partial N_{H_i}}= \dfrac{1}{N_H \mathcal{R}_0} \left((\mathcal{R}_0^{H_i})^2 D_i - \mathcal{R}_0^2\right) $.\\

We define the index 
\begin{equation*}
\Gamma_k =  \frac{N_{H_k}}{\mathcal{R}_0} \frac{d \mathcal{R}_0}{d N_{H_k}}
\end{equation*}
The index $\Gamma_k$ measures the sensitivity of $\mathcal{R}_0$ to changes of the population $N_k$.

\subsubsection{General constraints}\label{Gc}

Let us assume that the abundances of the hosts $\mathbf{N}=(N_{H_1},\ldots, N_{H_k})$ follow the $m$ community constraints: 
\begin{equation*}
\mathbf{F}(\mathbf{N}) = (F_1(\mathbf{N}), \ldots, F_m(\mathbf{N})) = (0, \ldots, 0)=\mathbf{0},
\end{equation*}
for some $m < k$. Here $F_1,\ldots,F_m$ are real-valued differentiable functions defined where the values for $\mathbf{N}$ have biological sense. Let $E$ be the set of such values of $\mathbf{N}$ where the community constraints are satisfied and let $\mathbf{N}_0\in E$. Under suitable conditions (see subsection \ref{jd} in Appendix), we have $$N_i = g_i(N_{m+1}, \ldots , N_{k})\quad\textrm{ for } i=1,\ldots,m,$$ for some functions $g_1, \ldots, g_m$ and for $\mathbf{N}\in E$ close to $\mathbf{N}_0$. The derivatives $\dfrac{\partial g_i}{\partial N_{j}}$, $i = 1, \ldots, m$, $j=m+1,\ldots,k$ can be computed in terms of the derivatives of the functions $F_1, \ldots, F_m$.

If $m=k-1$ and $\dfrac{\partial \mathcal{R}_0}{\partial N_k}(\mathbf{N}_0)\ne 0$ in a neighborhood of $\mathbf{N}_0$, then we have $$N_i = g_i(N_k)\quad\textrm{ for } i=1, \ldots, k-1.$$ 
Moreover, for all $\mathbf{N}\in E$ close to $\mathbf{N}_0$ we have the approximation 
$$N_i = g_i(N_k) \approx  -A_i N_k + B_i,$$
for some constants $A_i$, $B_i $ (see Appendix). Thus, locally we can consider linear restrictions as in (\ref{linconstraints}).




\subsection{The case of a single competent host}\label{singlecompetent}

In this section we consider the case of an endemic disease. Theorem \ref{tcompetent} implies the existence of a competent host in this setting. We will show that in the case when this competent host is unique the increase in its density implies the amplification of the disease if the densities of the rest of the hosts decrease. This corresponds to the cases when there is a unique host that thrives with biodiversity loss and this host is competent. 

\begin{theorem}\label{t1competent}
We assume that $\mathcal{R}_0^{H_i}<1$ for $i =1, \ldots, k-1$ and $\mathcal{R}_0^{H_k}>1$. Let $D_1, \ldots,D_k $ be such that $\mathcal{R}_0 \geq 1$. Then $$\dfrac{\partial R_0}{\partial D_i} < 0\quad\textrm{ for } i=1,\ldots,k-1$$ and $$\dfrac{\partial R_0}{\partial D_k} > 0.$$
In particular, under the assumption of model (\ref{ecompleto}), in the case of an endemic disease with a unique competent host, increase in its density together with decrease in the density of all other hosts implies amplification of the disease.
\end{theorem}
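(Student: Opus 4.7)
The plan is to reduce every sign claim in the theorem to the single strict inequality
\[
(\mathcal{R}_0^{H_i})^2 D_i \;<\; (\mathcal{R}_0^{H_k})^2 D_k \qquad\text{for every } i\le k-1.
\]
Once we enforce the constraint $\sum_j D_j=1$ by expressing one density as a function of the others, differentiating $\mathcal{R}_0^2=\sum_j(\mathcal{R}_0^{H_j})^2 D_j^2$ shows that every partial appearing in the statement is proportional to $(\mathcal{R}_0^{H_i})^2 D_i-(\mathcal{R}_0^{H_k})^2 D_k$ (taking $D_k$ as the dependent variable when we differentiate with respect to $D_i$ for $i\le k-1$, and some $D_j$ with $j<k$ as dependent when we differentiate with respect to $D_k$). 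The biological amplification statement will then follow from the identity $d\mathcal{R}_0=\mathcal{R}_0^{-1}\sum_{i<k}|dD_i|\bigl[(\mathcal{R}_0^{H_k})^2 D_k-(\mathcal{R}_0^{H_i})^2 D_i\bigr]$ together with $dD_k=\sum_{i<k}|dD_i|>0$.

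The easy half of the key inequality is $(\mathcal{R}_0^{H_i})^2 D_i<1$ for $i\le k-1$, which is immediate from $\mathcal{R}_0^{H_i}<1$ and $D_i\le 1$. The real work is the matching lower bound $(\mathcal{R}_0^{H_k})^2 D_k\ge 1$, which I expect to be the main obstacle. The idea is to exploit $(\mathcal{R}_0^{H_j})^2\le 1$ and $D_j^2\le D_j$ for $j\le k-1$ to control the suboptimal part of $\mathcal{R}_0^2$:
\[
\sum_{j=1}^{k-1}(\mathcal{R}_0^{H_j})^2 D_j^2 \;\le\; \sum_{j=1}^{k-1} D_j \;=\; 1-D_k.
\]
Plugging this estimate into the endemicity hypothesis $\mathcal{R}_0^2\ge 1$ yields $(1-D_k)+(\mathcal{R}_0^{H_k})^2 D_k^2\ge 1$, equivalently $(\mathcal{R}_0^{H_k})^2 D_k\ge 1$, provided $D_k>0$. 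The case $D_k=0$ must be ruled out separately: it would force $\mathcal{R}_0^2=\sum_{j<k}(\mathcal{R}_0^{H_j})^2 D_j^2<\sum_{j<k}D_j=1$, where the strict inequality uses $\mathcal{R}_0^{H_j}<1$ together with the fact that $\sum_{j<k}D_j=1$ ensures some $D_j>0$, contradicting $\mathcal{R}_0\ge 1$.

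Chaining $(\mathcal{R}_0^{H_i})^2 D_i<1\le (\mathcal{R}_0^{H_k})^2 D_k$ concludes the desired strict inequality, and the three sign statements of the theorem then follow through the correspondence laid out in the first paragraph. The only nontrivial step is extracting $(\mathcal{R}_0^{H_k})^2 D_k\ge 1$ from the simultaneous use of endemicity and the simplex constraint; everything else reduces to bookkeeping on the explicit formula for $\mathcal{R}_0^2$.
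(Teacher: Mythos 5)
Your proposal is correct and follows essentially the same route as the paper: both reduce the sign claims to the key inequality $(\mathcal{R}_0^{H_i})^2D_i<1\le(\mathcal{R}_0^{H_k})^2D_k$, extracted from $\mathcal{R}_0^2\ge 1=\sum_j D_j$ together with $D_j^2\le D_j$ and $\mathcal{R}_0^{H_j}<1$ for the suboptimal hosts. Your explicit handling of the $D_k=0$ case is in fact slightly more careful than the paper's argument, which leaves that step implicit.
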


\begin{proof}
See section \ref{onecompetent} in Appendix.
\end{proof}

\begin{corollary}
Under the assumption of model (\ref{ecompleto}), in the case of an endemic disease with a unique competent and resilient host, increase in its density implies amplification of the disease.
\end{corollary}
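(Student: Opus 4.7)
The plan is to derive this corollary by combining Theorem \ref{t1competent} with the definition of resilience, bridged by the chain rule applied along the community constraint.

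I would first identify $H_k$ as the unique competent host, so that $\mathcal{R}_0^{H_k}>1$, $\mathcal{R}_0^{H_i}<1$ for $i<k$, and $\mathcal{R}_0\geq 1$ by endemicity; this is exactly the setup of Theorem \ref{t1competent}. The resilience hypothesis means the linear constraints (\ref{linconstraints}) hold with $A_i>1$, in particular $A_i>0$, for every $i<k$, so any increase of $N_{H_k}$ forces $dN_{H_i}/dN_{H_k}=-A_i<0$. For general constraints the same conclusion holds locally by the linearization described in subsection \ref{Gc}.

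The argument would then conclude by the chain rule. Using
\[
r_i=\frac{\partial\mathcal{R}_0}{\partial N_{H_i}}=\frac{1}{N_H\mathcal{R}_0}\bigl((\mathcal{R}_0^{H_i})^2 D_i-\mathcal{R}_0^2\bigr)
\]
from subsection \ref{sconstraints}, the endemicity and unique-competent hypotheses force $r_i<0$ for every $i<k$ (because $(\mathcal{R}_0^{H_i})^2 D_i\leq (\mathcal{R}_0^{H_i})^2<1\leq\mathcal{R}_0^2$) and $r_k>0$ (the contrary $(\mathcal{R}_0^{H_k})^2 D_k\leq\mathcal{R}_0^2$, combined with $(\mathcal{R}_0^{H_j})^2<1$ for $j<k$, would push $\mathcal{R}_0^2$ strictly below $1$, contradicting endemicity). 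Hence
\[
\frac{d\mathcal{R}_0}{dN_{H_k}}=r_k-\sum_{i=1}^{k-1}A_i\,r_i
\]
is a sum of strictly positive terms, which is the desired amplification.

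The main obstacle is a mild notational one: Theorem \ref{t1competent} is phrased in the proportions $D_i$, whereas what we need to control is the total derivative along the constraint in the abundance $N_{H_k}$. Because $N_H$ itself varies with $N_{H_k}$, the individual signs of $dD_i/dN_{H_k}$ need not match those of the partials in the theorem; working instead with the $N_{H_i}$-partials $r_i$ (whose signs track with those of the $D_i$-partials used in Theorem \ref{t1competent}) circumvents this cleanly.
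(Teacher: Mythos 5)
Your proposal is correct, and it is in fact more careful than what the paper offers: the paper states the corollary with no proof, the implicit argument being ``resilience makes $D_k$ increase and the other densities decrease, now apply Theorem~\ref{t1competent}.'' You correctly identify the weak point of that implicit argument: since $N_H$ varies along the constraint, $dD_i/dN_{H_k}$ can be positive for some suboptimal host when $k\ge 3$ (e.g.\ $k=3$, $A_1$ slightly above $1$, $A_2$ large, $D_1$ moderate gives $dD_1/dN_{H_k}=(-A_1+D_1(A_1+A_2-1))/N_H>0$), so Theorem~\ref{t1competent} does not apply verbatim to the densities. Your route through the abundance partials $r_i$ and the directional derivative $D_{\mathbf u}\mathcal{R}_0=\sum_i u_i r_i$ is exactly the machinery the paper sets up in subsection~\ref{sconstraints} and Appendix~\ref{jd}, and it closes this gap cleanly: $r_i<0$ for $i<k$ is immediate, and your contradiction argument for $r_k>0$ (a weighted average of terms each at most $\mathcal{R}_0^2$, with the $i<k$ terms strictly below $1$) is valid whenever $D_k<1$. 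Two trivial caveats: in the degenerate case $D_k=1$ you only get $r_k=0$, so the sum is not ``a sum of strictly positive terms'' but is still strictly positive via the $-A_ir_i$ terms (assuming $k\ge2$ and $A_i>0$); and note that $r_k\ge 0$ is genuinely stronger than the inequality $(\mathcal{R}_0^{H_k})^2D_k\ge 1$ established inside the paper's proof of Theorem~\ref{t1competent}, so your separate derivation of it is necessary, not redundant.
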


Let us assume 
\begin{equation*}
N_{H_i} = -A N_{H_k} + B_i \quad\textrm{ for } i = 1,\ldots, k-1.
\end{equation*}
for some constants $A$, $B_i$. If $A>1$, then the host $H_k$ is resilient and, the greater $A$ is, the more resilient $H_k$ is. We have that $\Gamma_k$ is an increasing function of $A$ (see subsection \ref{jd} in Appendix). Furthermore, taking $D_k$ and $A$ large, we have 
\begin{equation}\label{egammak}
\Gamma_k \approx  (k-1)A.
\end{equation}
Hence $ \Gamma_k $ increases as $k$, $D_k$ and $A$ increase. This implies that the more resilient the host $H_k$ is, the greater its effect on $\mathcal{R}_0$ is, in the case when this host is abundant. The case $k=2$ is represented in Fig. \ref{const2}. 

\begin{figure}[H]
\subfloat[]{\includegraphics[scale=0.325]{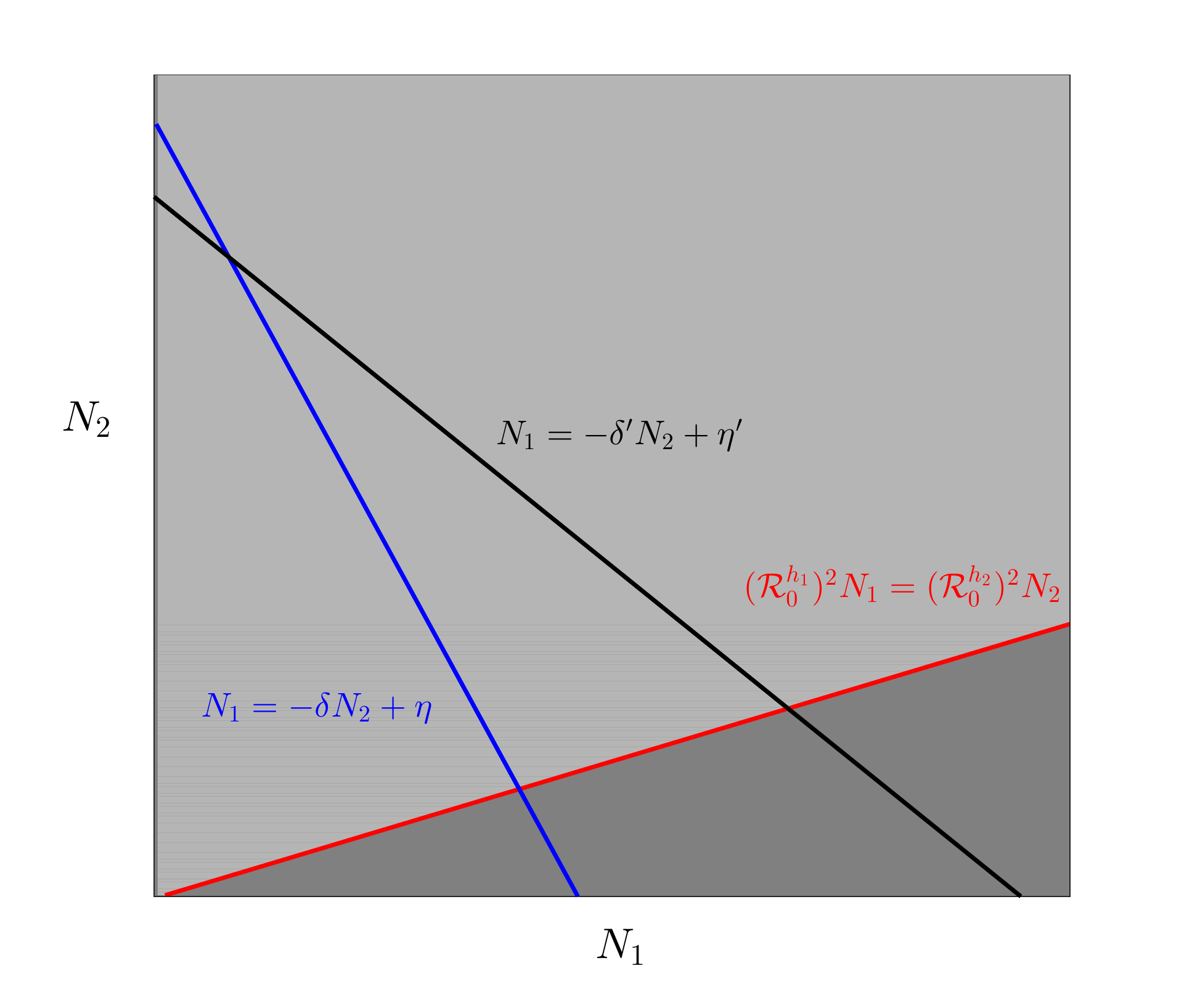}}
\subfloat[]{\includegraphics[scale=0.25]{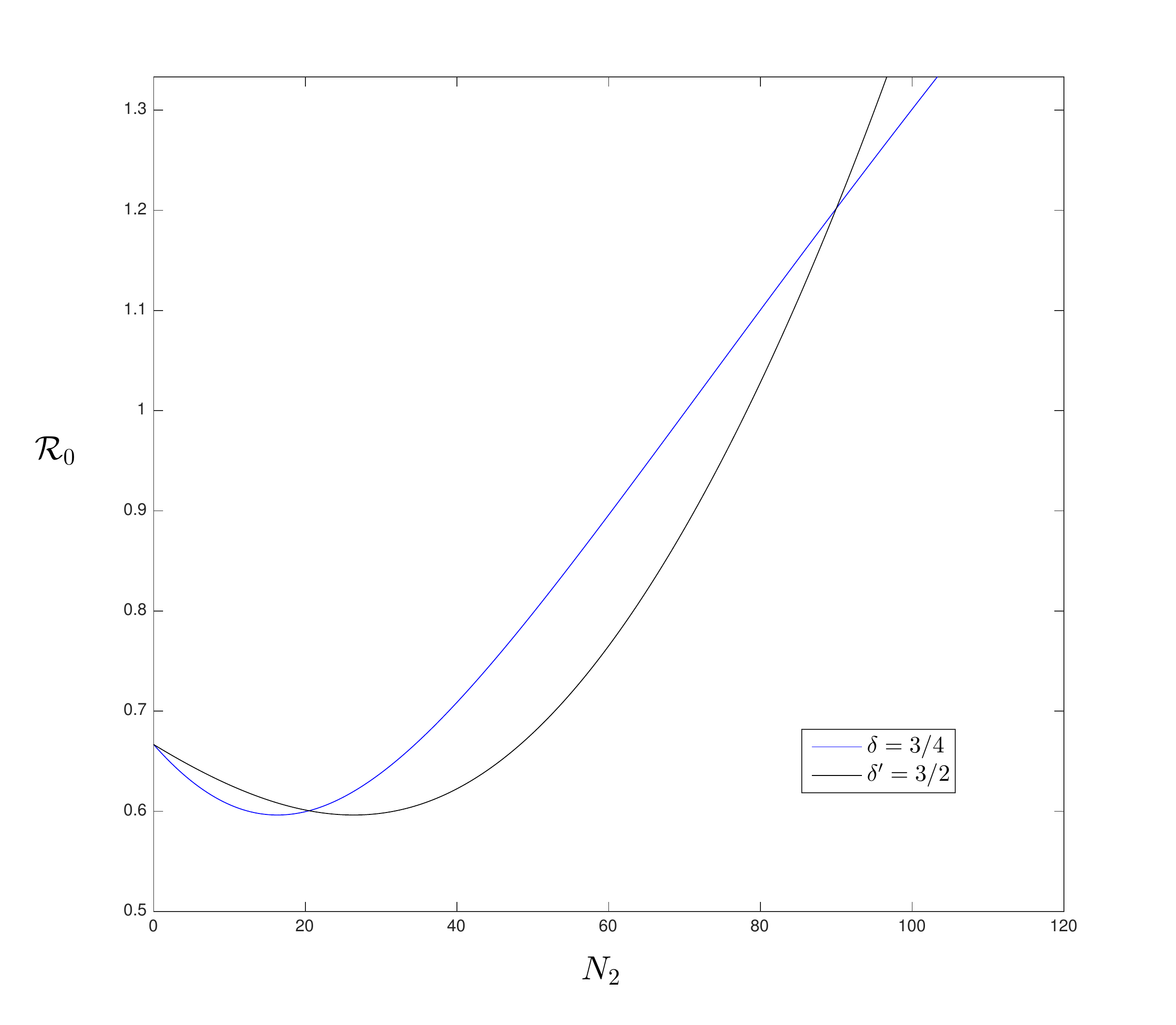}}\\
\centering{\subfloat[]{\includegraphics[scale=0.25]{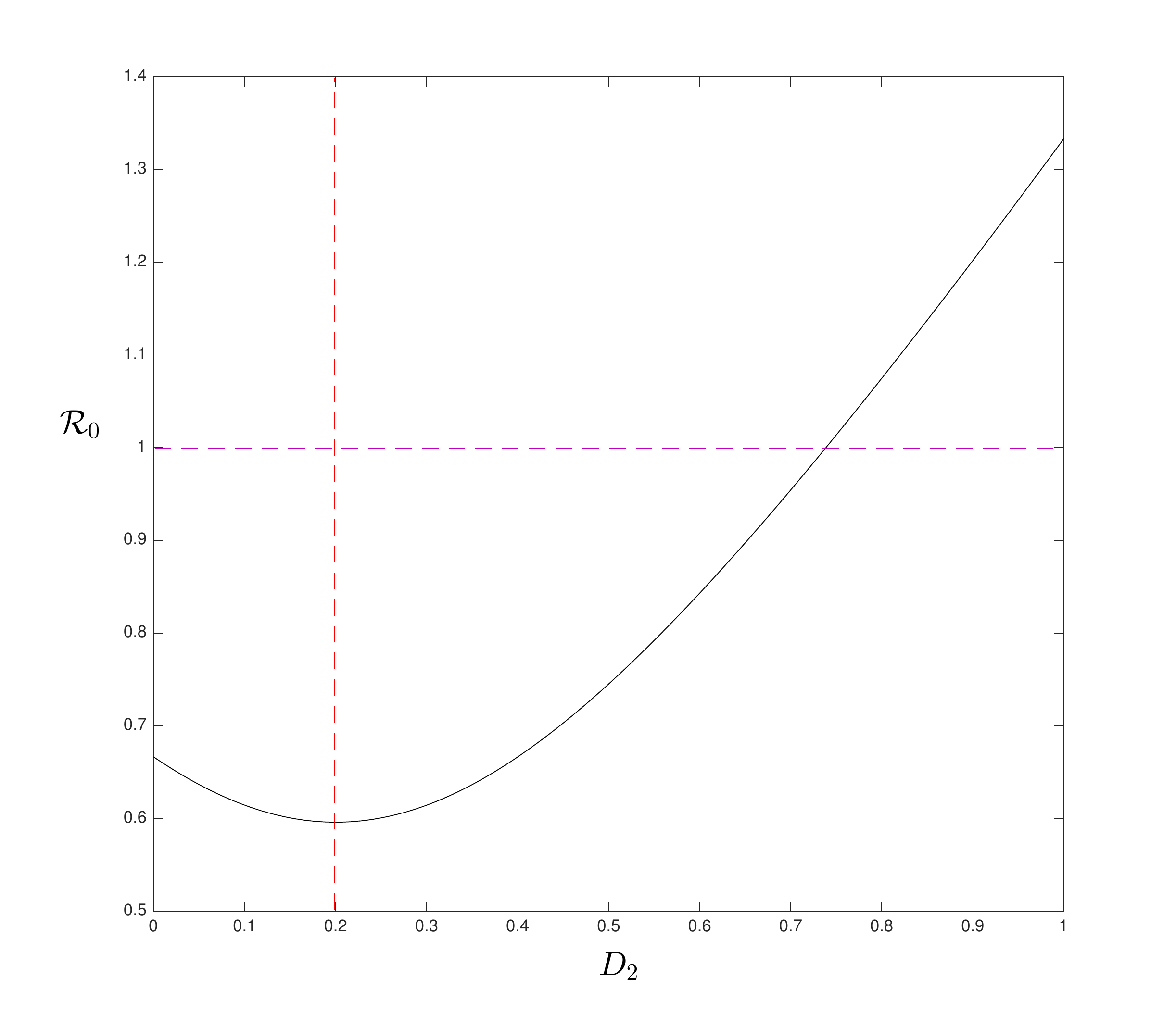}}}

\caption{In figure (a) the host $H_2$ is the competent host (the slope of the red line is less than one). The blue and the black lines represent the community linear constraints. Over the blue line the host $H_2$ is non-resilient, whereas over the black line the host $H_2$ is resilient. In figure (b) the blue graph represents $\mathcal{R}_0$ when host $H_2$ is competent and non-resilient and the black graph represents the case when $H_2$ is competent and resilient. Close to the intersection of these graphs (where $ D_2 = 0.9$ and $ D_1=0.1$) the derivative $\dfrac{d\mathcal{R}_0}{dN_{H_2}}$ of the black graph is greater than of the blue one. Moreover, $\Gamma_2$ is greater for the black graph than for the blue one, as expected by (\ref{egammak}). In this simulation $\mathcal{R}_0^{H_1} = 2/3$, $\mathcal{R}_0^{H_2} = 4/3$. In figure (c), on the left side of the dashed red line (where the minimum of $\mathcal{R}_0$ is attained), there is dilution of the disease if $D_2$ increases. On the right side of the red line, there is amplification of the disease as $D_2$ increases. Moreover, the amplification of the disease above the dashed purple line (where $\mathcal{R}_0\geq 1$) is ensured by Theorem \ref{t1competent}.}
\label{const2}
\end{figure}

\section{Conclusions} \label{sconclusions}

In this paper we present a mathematical framework that explains how changes of biodiversity can lead to the dilution or amplification of the disease. We show that the square of the basic reproductive number of the whole ecosystem is the weighted average of the squares of the basic reproductive numbers of the cycles between the vector and the hosts, weighted by their densities. Therefore, the accumulative effect of the hosts that buffer the disease is less than additive. Moreover, we obtain that the mininum of the basic reproductive number of the whole system is the harmonic mean of the basic reproductive numbers of the cycles. Hence, we conclude that an increase in biodiversity could dilute the disease and that loss in biodiversity could amplify the disease. Furthermore, we obtain that a necessary condition for the endemicity of a disease is the presence of a competent host. 

Finally, we study the case of an endemic disease. To explain how changes in the ecosystem affects the density of the hosts we assume that the abundances of the hosts follow a conservation law given by community constraints. We show that in the case when we have small changes in abundances, general constraints can always be linearized, thus it is sufficient to consider only linear constraints. We obtain that in the case of a disease with a unique resilient and competent host increase in its density amplifies the infection.

\section{Appendix}\label{sappendix}

\subsection{Next generation matrix} \label{ssngm}

We will compute $\mathcal{R}_0$ using the NGM method from \cite{van2002reproduction}.
From model (\ref{ecompleto}) we obtain the matrices $F$ and $V$ that define the NGM:

$$F=
\begin{pmatrix}

 0 &  \beta_{H_1V} D_1  & \beta_{H_2V} D_2 & \ldots & \beta_{H_kV} D_k \\
 \beta_{VH_1} D_1 & 0 & 0 & 0 & 0 \\
 \beta_{VH_2} D_2& 0 & 0 & 0 & 0 \\
 \ddots & \ddots & \ddots & \ldots &0\\
 \beta_{VH_k} D_k& 0 & 0 & 0 & 0 \\

\end{pmatrix}, V=
\begin{pmatrix}
  \delta_V &  0 & 0 & 0 & 0 \\
 0 & \delta_{H_1} & 0 &  0 & 0 \\
 0 & 0 & \delta_{H_2} & \ldots & 0 \\
 \ddots & \ddots & \ddots & \ldots &0\\
 0 & 0 & 0 & \ldots & \delta_{H_k} \\

\end{pmatrix}.$$
Hence, the NGM is:

\begin{equation*}
G = FV^{-1}=
\begin{pmatrix}
  0 &  \frac{\beta_{H_1V}}{\delta_{H_1}} D_1 & \frac{\beta_{H_2V}}{\delta_{H_2}} D_2 & \ldots & \frac{\beta_{H_kv}}{\delta_{H_k}} D_k \\
 \frac{\beta_{VH_1}}{\delta_{V}} D_1 & 0 &0& \ldots & 0 \\
 \frac{\beta_{VH_2}}{\delta_{V}} D_2& 0 &0& \ldots & 0 \\
 \vdots & \ddots & \ddots & \ldots &0\\
 \frac{\beta_{VH_k}}{\delta_{V}} D_k&  0 &0& \ldots & 0 \\\end{pmatrix}.
\end{equation*}
Computing the spectral radius of the matrix $G$, we obtain that the basic reproductive number of the whole system is given by
\begin{equation} \label{r0msimple}
 \mathcal{R}_0=\rho(FV^{-1}) = \sqrt{\sum_{i=1}^k \frac{\beta_{VH_i}}{\delta_V} \frac{\beta_{H_iV}}{\delta_{H_i}}D_i^2}.
\end{equation}

The disease free equilibrium (DFE) of model (\ref{ecompleto}) is $\mathbf{I}^*=(I_V^*,I_{H_1}^*,\ldots,I_{H_k}^*)=\mathbf{0}$. The following theorem explains how the basic reproductive number is related to the stability of the DFE in model (\ref{ecompleto}) \cite[Theorem 2]{van2002reproduction}.

\begin{theorem}
Let $\mathbf{I}^*$ be the DFE of (\ref{ecompleto}). Then, $\mathcal{R}_0<1$ implies that $\mathbf{I}^*$ is locally asymptotically stable and $\mathcal{R}_0>1$ implies that $\mathbf{I}^*$ is unstable.  
\label{umbral}
\end{theorem}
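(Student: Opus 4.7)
The plan is to invoke Theorem 2 of \cite{van2002reproduction} directly: the matrices $F$ and $V$ have already been exhibited in this appendix and $\mathcal{R}_0=\rho(FV^{-1})$ has been computed, so the only remaining work is to verify that the structural hypotheses of the next-generation framework are satisfied by model (\ref{ecompleto}).

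First, I would cast (\ref{ecompleto}) in the standard decomposition $\dot{x}=\mathcal{F}(x)-\mathcal{V}(x)$ with infected compartments $x=(I_V,I_{H_1},\ldots,I_{H_k})$. The vector-valued function $\mathcal{F}$ collects the new-infection terms $\beta_{VH_i}I_V S_{H_i}/N_H$ (infection of susceptible hosts $H_i$ by infectious vectors) and $\sum_i\beta_{H_iV}I_{H_i}(N_{H_i}/N_H)(S_V/N_V)$ (infection of susceptible vectors by infectious hosts), while $\mathcal{V}$ collects the linear removal terms $\delta_V I_V$ and $\delta_{H_i}I_{H_i}$. Because $N_V$ and $N_{H_i}$ are held constant by the standing assumption $\dot{N}_V=\dot{N}_{H_i}=0$, the susceptibles $S_V=N_V-I_V$ and $S_{H_i}=N_{H_i}-I_{H_i}$ are smooth functions of the infected variables, so $\mathcal{F}$ and $\mathcal{V}$ are $C^1$ on the biologically relevant domain.

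Next, I would check the five sign/structural axioms of the framework: (i) $\mathcal{F}\ge 0$ componentwise, since each term represents a nonnegative infection rate; (ii) $\mathcal{V}_i\ge 0$ whenever the $i$-th compartment is empty (vacuous, since $\mathcal{V}$ is linear in $x$ and vanishes at $x=0$); (iii) the row sum of $\mathcal{V}$ is nonnegative (the $\delta$ terms represent only outflows from the infected class); (iv) $\mathcal{F}$ vanishes in compartments that are not infected at the DFE (trivially true, since we have labelled all compartments as infected); and (v) the linearization of the disease-free subsystem at $\mathbf{I}^{*}=\mathbf{0}$ has eigenvalues with negative real parts. Axiom (v) is vacuous here: because the model (\ref{ecompleto}) is written purely in the infected variables and the total populations are frozen in time, there is no nontrivial disease-free subsystem whose stability needs to be checked.

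Once these axioms are in place, Theorem 2 of van den Driessche and Watmough applies verbatim and gives the two implications: if $\mathcal{R}_0=\rho(FV^{-1})<1$ then all eigenvalues of $F-V$ have negative real part, so $\mathbf{I}^{*}$ is locally asymptotically stable, while if $\mathcal{R}_0>1$ then $F-V$ has an eigenvalue with positive real part and $\mathbf{I}^{*}$ is unstable. The main (minor) obstacle is purely bookkeeping, namely being careful to put every bilinear incidence term into $\mathcal{F}$ and every linear removal term into $\mathcal{V}$ so that the Jacobians at the DFE coincide exactly with the $F$ and $V$ displayed earlier in this appendix; once that matching is made explicit, the proof is an immediate citation of the general theorem.
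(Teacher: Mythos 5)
Your proposal is correct and follows exactly the paper's route: the paper proves Theorem \ref{umbral} simply by citing Theorem 2 of van den Driessche and Watmough, with the matrices $F$ and $V$ already identified as the Jacobians of the new-infection and transition terms at the DFE. Your additional verification of the structural axioms (and the observation that the disease-free subsystem condition is trivially satisfied because the totals are frozen and $-V$ is a negative diagonal matrix) only makes explicit what the paper leaves implicit.
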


\subsection{Community constraints}\label{jd}

Let $F_1,\ldots,F_m$ and $E$ be as in subsection \ref{Gc} and let $\mathbf{N}_0\in E$. We assume that the matrix 
\begin{equation*}
J_1 = \frac{\partial(F_1, \ldots, F_m)}{\partial(N_{H_1}, \ldots, N_{H_m})}(\mathbf{N}_0) = \left( \frac{\partial F_i (\mathbf{N}_0)}{\partial N_{H_j}}\right)_{1\le i,j\le m}
\end{equation*} 
is invertible and let us define 
\begin{equation*}\label{ji}
J_2 = \frac{\partial(F_1, \ldots, F_m)}{\partial(N_{H_{m+1}}, \ldots, N_{H_m})}(\mathbf{N}_0) = \left( \frac{\partial F_i (\mathbf{N}_0)}{\partial N_{H_j}}\right)_{1\le i\le m,m< j\le k}.
\end{equation*}

The implicit function theorem states that there exists a neighborhood in $E$ of $\mathbf{N}_0$ where we have $N_i = g_i(N_{m+1}, \ldots , N_{k})$ for $i = 1, \ldots, m$. Furthermore, if $\mathbf{g}= (g_1, \ldots, g_m)$, then 
$$\frac{\partial\mathbf{g}}{\partial N_{H_j}} =
\begin{pmatrix}
\frac{\partial g_1}{\partial N_{{H_j}}} \\
\vdots \\
\frac{\partial g_m}{\partial N_{H_j}}
\end{pmatrix} = - J_1 ^{-1} \begin{pmatrix}
\frac{\partial F_1}{\partial N_{H_j}} \\
\vdots \\
\frac{\partial F_m}{\partial N_{H_j}}
\end{pmatrix},$$
for $m< j\le k$. 

We define $$ J = \frac{\partial(F_1, \ldots, F_m)}{\partial(N_{H_1}, \ldots, N_{H_k})}(\mathbf{N}_0) = \left( \frac{\partial F_i (\mathbf{N}_0)}{\partial N_{H_j}}\right) _{1\le i\le m, 1\le j \le k}.$$
We are interested in computing $D_{\mathbf{u}} \mathcal{R}_0$ for $\mathbf{u}\in T_{\mathbf{N}_0} E$, where $$ T_{\mathbf{N}_0} E = \{\mathbf{u} \in \mathcal{R}^k | J \mathbf{u} = \mathbf{0} \}.$$
If $\mathbf{u} = (u_1, \ldots, u_k)$, using $u_{m+1}, \ldots, u_k$ as free variables, we have that 
$$
\begin{pmatrix}
u_1 \\
\vdots \\
u_m
\end{pmatrix} = - \sum_{j=m+1}^k u_j J_1 ^{-1} \begin{pmatrix}
\frac{\partial F_1}{\partial N_{H_j}} \\
\vdots \\
\frac{\partial F_m}{\partial N_{H_j}}
\end{pmatrix} = - J_1^{-1}J_2\begin{pmatrix}
u_{m+1} \\
\vdots \\
u_k
\end{pmatrix}.$$
Therefore, for a given set of values $u_{m+1}, \ldots, u_k$, we can obtain the values $u_1, \ldots, u_m$ and $$D_{\mathbf{u}}\mathcal{R}_0 = \sum_{i=i}^k u_i r_i,$$
where $r_i= \dfrac{1}{N_H \mathcal{R}_0} \left((\mathcal{R}_0^{H_i})^2 D_i - \mathcal{R}_0^2\right)$ is evaluated in $\mathbf{N}_0$.

If we assume $m=k-1$, then there exists a neighborhood in $E$ of $\mathbf{N}_0$ where $$ N_{H_i} = g_i(N_{H_k})\quad\textrm{ for } i=1, \ldots, k-1.$$ 
Furthermore, $$J_2=\begin{pmatrix}
\frac{\partial F_1(\mathbf{N}_0)}{\partial N_{H_j}} \\
\vdots \\
\frac{\partial  F_{k-1}(\mathbf{N}_0)}{\partial N_{H_j}}
\end{pmatrix}
$$
and
$$
\begin{pmatrix}
u_1 \\
\vdots \\
u_{k-1}
\end{pmatrix} = - u_k J_1 ^{-1} \begin{pmatrix}
\frac{\partial F_1(\mathbf{N}_0)}{\partial N_{H_j}} \\
\vdots \\
\frac{\partial  F_{k-1}(\mathbf{N}_0)}{\partial N_{H_j}}
\end{pmatrix} = -  J_1^{-1}J_2u_k,$$
for $(u_1, \ldots, u_{k-1}, u_k)\in T_{\mathbf{N}_0} E$.
Taking $u_k =1$, we have
$$\begin{pmatrix}
u_1 \\
\vdots \\
u_{k-1}
\end{pmatrix} = \begin{pmatrix}
\frac{\partial g_1}{\partial N_{H_k}} \\
\vdots \\
\frac{\partial g_m}{\partial N_{H_k}}
\end{pmatrix}.
$$
Therefore, for $\mathbf{N}\in E$ close to $\mathbf{N}_0$ we have the approximations 
$$N_{H_i} = g_i(N_{H_k}) \approx u_i (N_{H_k} - N_{H_k}^0) + N_{H_i}^0 = -A_i N_{H_k} + B_i,$$
for $i = 1, \ldots, k-1$, where $A_i= -u_i$ and $B_i = N_{H_i}^0 - u_i N_{H_k}^0$. 

\subsection{One competent host}\label{onecompetent}
We assume that $\mathcal{R}_0^{H_i}<1$ for $i =1, \ldots, k-1$ and $\mathcal{R}_0^{H_k}>1$. Let $D_1, \ldots,D_k $ be such that $\mathcal{R}_0 \geq 1$. We will prove that $\dfrac{\partial\mathcal{R}_0}{\partial D_i} < 0$  for $i=1,\ldots,k-1$ and $\dfrac{\partial\mathcal{R}_0}{\partial D_k} > 0$. Using $\sum_{j=1}^{k}D_j=1$, we have
$$ \dfrac{\partial\mathcal{R}_0}{\partial D_i} = \dfrac{1}{\mathcal{R}_0}\left((\mathcal{R}_0^{H_i})^2D_i- (\mathcal{R}_0^{H_k})^2D_k\right)\quad\textrm{ for }i=1,\ldots,k-1.$$
Furthermore, since $\mathcal{R}_0= \sum_{i=1}^k (\mathcal{R}_0^{H_i})^2 D_i^2$, we obtain $$ D_k\left((\mathcal{R}_0^{H_k})^2 D_k - 1\right) \ge \sum_{i=1}^{k-1} D_i\left(1-(\mathcal{R}_0^{H_i})^2 D_i\right) \ge 0.$$  
Therefore, $$ (\mathcal{R}_0^{H_k})^2 D_k \ge 1,$$ hence $$\dfrac{\partial\mathcal{R}_0}{\partial D_i} < 0\quad\textrm{ for } i=1,\ldots,k-1$$ and $$\dfrac{\partial\mathcal{R}_0}{\partial D_k}=\dfrac{\partial\mathcal{R}_0}{\partial D_1}\dfrac{\partial D_1}{\partial D_k}>0.$$ \\  

We have $$ \Gamma_k = \frac{D_k}{\mathcal{R}_0^2}\sum_{i=i}^k u_i ((\mathcal{R}_0^{H_i})^2 D_i - \mathcal{R}_0^2).$$ 
If $\mathbf{u} = (-A, \ldots, -A, 1)$, then 
\begin{equation*}
\Gamma_k =\frac{D_k}{\mathcal{R}_0^2} (Ar + ((\mathcal{R}_0^{H_k})^2 D_k - \mathcal{R}_0^2)),
\end{equation*}
where $r = -\sum_{i=i}^{k-1}  ((\mathcal{R}_0^{H_i})^2 D_i - \mathcal{R}_0^2)$. Since the hosts $H_1, \ldots, H_{k-1}$ are suboptimal, we have $r >0$, hence $\Gamma_k$ is an increasing function of $A$.

If $D_k$ is large, then $D_1, \ldots, D_{k-1}$ are small and $(\mathcal{R}_0^{H_i})^2 D_i - \mathcal{R}_0^2 \approx - \mathcal{R}_0^2$ for  $i = 1, \ldots, k-1$. Therefore,  $r \approx (k-1)\mathcal{R}_0^2$. Furthermore, if   $(\mathcal{R}_0^{H_k})^2 D_k - \mathcal{R}_0^2 \approx 0 $ and  $A$ is large, then 
$$ \Gamma_k \approx  (k-1)A. $$

\end{document}